\newcommand{\C}{\mathbb{C}}
\newcommand{\N}{\mathbb{N}}
\newcommand{\R}{\mathbb{R}}
\newcommand{\Z}{\mathbb{Z}}
\newcommand{\A}{\mathcal{A}}
\newcommand{\cS}{\mathcal{S}}
\newcommand{\cO}{\mathscr{O}}
\newcommand{\td}{\text{d}}
\newcommand{\Spec}{\text{Spec}}
\newcommand{\Res}{\text{Res}}
\newcommand{\pa}{\partial}
\newtheorem{theorem}{Theorem}[section]
\newtheorem{proposition}[theorem]{Proposition}
\newtheorem{lemma}[theorem]{Lemma}
\newtheorem{remark}[theorem]{Remark}
\newtheorem{definition}[theorem]{Definition}
\newtheorem{example}[theorem]{Example}
\title{A note on cohomological mirror symmetry of toric manifolds}
\author{Hao Wen}
\newcommand{\Addresses}{{% additional braces for segregating \footnotesize
  \bigskip
  \footnotesize

  Hao Wen, \textsc{School of Mathematical Sciences and the Key Laboratory of Pure Mathematics and Combinatorics, Nankai University, Tianjin 300071, China}\par\nopagebreak
  \textit{E-mail address} \texttt{wenhao@nankai.edu.cn}

}}
\begin{document}
\maketitle

\begin{abstract}
	In this note we describe a logarithmic version of mirror Landau-Ginzburg model for a semi-projective toric manifold and show the ring of state space of the Landau-Ginzburg model is isomorphic to the $\C$-valued cohomology of the toric manifold.
\end{abstract}

%\keywords{\small{\textbf{Keywords}: mirror symmetry, Landau-Ginzburg model, toric manifold}}

%\subjclass{\small{\textbf{MSC[2020]}: 14J33}}

%\tableofcontents

\section{Introduction}

In mirror symmetry beyond the Calabi-Yau case, the mirror of a toric manifold $X_\Sigma$ is given by a Landau-Ginzburg model $(Y,f)$, where $Y=(\C^*)^n$ and $f$ is a Laurent polynomial defined on $Y$.
Related mirror statements are well established during the past years, see for example \cite{G,LLY,Ir,CCIT}.
Among the various results a fundamental one is an isomorphism between the A-model moduli space of $X$ and the B-model moduli space of its mirror $(Y,f)$.
This isomorphism identifies the symplectic geometric structure of $X_\Sigma$ near the large volume limit and the complex geometry structure of $(Y,f)$ near the large complex structure limit.
In particular, it implies the quantum cohomology ring of $X_\Sigma$ is isomorphic to the Jacobi ring of $(Y,f)$.

In this note, we concern the cohomologies \emph{at} the the limit points of the moduli spaces, that's, we will compare the ordinary cohomology ring of a toric manifold $X_\Sigma$ with the state space ring of a limiting version of its mirror.
Motivated by the work of Gross and Siebert, we define such mirror to be a logarithmic version of Landau-Ginzburg model $(Y_\Sigma^\dag,f)$, where $Y_\Sigma^\dag$ is a log Calabi-Yau variety with a log smooth map to the standard log point $0^\dag$, and $f$ is a holomorphic function on $Y_\Sigma^\dag$.
For a fixed semi-projective toric manifold $X_\Sigma$, the pair $(Y_\Sigma^\dag,f)$ can be defined combinatorially from the defining fan $\Sigma$.
In fact, such $(Y_\Sigma^\dag,f)$ is the $q\to 0$ limit of the Hori-Vafa mirror in \cite{HV}. The log sturture and the log smooth map are needed to record the limiting process.

The idea is as follows.
Given a general toric manifold $X_\Sigma$ with defining fan $\Sigma$, it can be viewed as a gluing of affine toric varieties and its cohomology can be computed by a \v{C}ech double complex.
On the other hand, we can construct from combinatoric data of the fan $\Sigma$ an affine space $Y_\Sigma$ and define a differential graded algebra (abbrev. dga)  $L(Y_\Sigma)$ on it.
The point is that we can view $Y_\Sigma$ as a gluing of affine spaces and consider the restriction of $L(Y_\Sigma)$ on each affine piece.
There is a one-to-one correspondence between the pieces on both sides and corresponding pieces have isomorphic cohomologies.
By technically introducing a third dga, we show that the de Rham dga $(\A^\bullet(X_\Sigma),\td)$ is quasi-isomorphic to $L(Y_\Sigma)$.
When $X_\Sigma$ is semi-projective, $Y_\Sigma$ can be equipped with a log structure and there is a natural map from $Y_\Sigma^\dag$ to the standard log point $0^\dagger$. We show that $L(Y_\Sigma)$ is isomorphic to the twisted dga of the Landau-Ginzrbug model $(Y_\Sigma^\dag,f)$.
In this way we explicitly connect the the cohomology ring of $X_\Sigma$ with the state space ring of its mirror Landau-Ginzburg model.

The cohomology of toric manifolds is well-known, and the concept of logarithmic extension of Landau-Ginburg model also appeared previously in \cite{CCIT} (see also \cite{CMW}), hence the novelty of the present note is mainly the explicit connection between the cohomologies of the mirror sides.
The mechanism behind the phenomenon of mirror symmetry is mysterious, attempts to reveal it include homological mirror symmetry (\cite{Ko}), SYZ conjecture (\cite{SYZ}) and Gross-Siebert program (\cite{GS1,GS2}).
Though being elementary, we hope the connection described in this note gives a clue on why we can expect the intricate mirror symmetry phenomenon for toric manifolds.

\section{Spaces associated to the defining fan of a toric manifold}

Any toric variety can be defined by a fan, so we start by specifying the fan structure.
Let $N \cong \Z^n$ be a rank $n$ lattice and $M$ be the dual lattice. Define $N_\R := N\otimes_\Z \R, M_\R := M\otimes _\Z \R$.
Let $\Sigma$ be a smooth fan in $N_\R$, which implies the set of integral generators of rays of every cone $\sigma \in \Sigma$ can be complemented into a $\Z$-basis of $N$. Denote by $\Sigma(k)$ the set of cones of dimension $k$ in $\Sigma$ and let $\Sigma(1) = \{\rho_1,\cdots,\rho_d\}$. Note that a general fan $\Sigma$ may not contain any $n$-dimensional cone.

Let $X_\Sigma$ be the toric variety associated to $\Sigma$, and $X_\sigma$ be the affine toric variety associated to $\sigma \in \Sigma$.
The smoothness of $\Sigma$ implies smoothness of $X_\Sigma$ and each $X_\sigma$ is isomorphic to $\C^k\times (\C^*)^{n-k}$ for $k=\dim \sigma$.
The correspondence $\sigma \mapsto X_\sigma$ is inclusion preserving, that's, when $\sigma$ is a face of $\sigma'$, there is a toric embedding $X_\sigma \hookrightarrow X_{\sigma'}$.
Let $\cS = \{\sigma_1,\cdots,\sigma_s\}$ be a fixed subset of closed cones of $\Sigma$ that cover the support $|\Sigma|$, then $X_\Sigma$ has a covering given by affine toric varieties indexed by $\cS$:
\begin{align} \label{Cover of X}
    X_\Sigma = \bigcup_{\sigma \in \cS} X_\sigma.
\end{align}

There is another affine variety $Y_\Sigma$ naturally associated to $\Sigma$.
Recall that a primitive collection of $\Sigma$ is a subset $P = \{\rho_{i_1}, \cdots,\rho_{i_k}\}$ of $\Sigma(1)$ such that $\rho_{i_1}, \cdots,\rho_{i_k}$do not lie in a common cone of $\Sigma$ while any subset of $P$ do lie in one.
By abuse of notation, let $\rho_i$ denote also the integral generator of the ray $\rho_i$.
Associate to each $\rho_i$ a complex coordinate $z_i$, then each primitive collection corresponds to a monomial $m_P = z_{i_1} \cdots z_{i_k}$.
The Stanley-Reisner ring $R(\Sigma)$ is defined as the quotient
\begin{align*}
    R(\Sigma) := \C[z_1,\cdots,z_d] / \langle \{m_P| P \text{ is a primitive collection.}\} \rangle.
\end{align*}
Geometrically, $R(\Sigma)$ can be viewed as the coordinate ring of an affine variety $Y_\Sigma$ in $\C^d$, with
\begin{align*}
    Y_\Sigma :=\{(z_1,\cdots,z_d)| m_P = 0, \forall P, P \text{ is a primitive collection}\}.
\end{align*}
$Y_\Sigma$ is in fact a union of intersections of coordinate hyperplanes: it can be written as $Y_\Sigma = \bigcup_{\sigma \in \Sigma} Y_\sigma$,
where $Y_\sigma = \{z|z_i=0, \rho_i \notin \sigma\}$.
The correspondence between cones of $\Sigma$ and components of $Y_\Sigma$ is inclusion preserving: $\sigma_i$ is a face of $\sigma_j$ if and only if $Y_{\sigma_i} \subset Y_{\sigma_j}.$
Since the cones in $\cS$ cover $|\Sigma|$, we have
\begin{align}\label{Cover of Y}
	Y_\Sigma = \bigcup_{\sigma \in \cS} Y_\sigma.
\end{align}
	
Note again that since $\Sigma$ may not contain any $n$-dimensional cone, the components $Y_\sigma, \sigma \in \cS$ may have dimension less than $n$.

\section{Presheaves and \v{C}ech complexes}

We will use the language of presheaves on simplicial complex and  the associated \v{C}ech complexes to describe the constructions on both sides of mirror symmetry.

\begin{definition}
    Let $\Delta$ be a simplicial complex. A (covariant) presheaf $\mathcal{V}$ of $\C$-vector space on $\Delta$ consists of the following data:
    \begin{enumerate}
        \item for every subsimplex $\tau$ of $\Delta$, a $\C$-vector space $\mathcal{V}(\tau)$, and
        \item for every inclusion $\tau' \subset \tau$ of simplexes, a morphism of $\C$-vector spaces $r= r_{\tau \tau'}: \mathcal{V}(\tau') \to \mathcal{V}(\tau)$,
    \end{enumerate}
    such that the following conditions are satisfied:
    \begin{enumerate}
        \item $\mathcal{V}(\emptyset) = 0$,
        \item $r_{\tau \tau}$ is identity map on $\mathcal{V}(\tau)$,
        \item if $\tau'' \subset \tau' \subset \tau$ are subsimplexes of $\Delta$, then $r_{\tau \tau''} = r_{\tau \tau'} \circ r_{\tau' \tau''}$.
    \end{enumerate}
\end{definition}

One can associate a \v{C}ech complex to a presheaf $\mathcal{V}$ on $\Delta$.
Let $\Delta_p$ be the set of $p$-simplex of $\Delta$.
Define
\begin{align*}
    \mathcal{C}^p(\Delta, \mathcal{V}) := \bigoplus_{\tau \in \Delta_p} \mathcal{V}(\tau), \quad \mathcal{C}(\Delta, \mathcal{V}) = \bigoplus_p \mathcal{C}^p(\Delta, \mathcal{V}).
\end{align*}
For $\vec{\alpha} = (\alpha(\tau)_{\tau \in \Delta_p}) \in \mathcal{C}^p$, the combinatorial \v{C}ech map $\delta: \mathcal{C}^p(\Delta, \mathcal{V}) \to \mathcal{C}^{p+1}(\Delta, \mathcal{V})$ is given by
\begin{equation*}
    (\delta \alpha)(\tau_{i_0\cdots i_{p+1}}) = \sum_{0\leq k \leq p+1} (-1)^k r_{\tau_{i_0\cdots i_{p+1}} \tau_{i_0\cdots \hat i_k \cdots i_{p+1}}}\alpha(\tau_{i_0\cdots \hat i_k \cdots i_{p+1}}),
\end{equation*}
where $\tau_{i_0\cdots i_p}$ denotes the simplex with vertices $i_0,\cdots,i_p$ and $\hat{i_k}$ implies the index $i_k$ is removed. Here a total ordering of the vertices of $\Delta$ is fixed and when we write  $\tau_{i_0\cdots i_p}$, it is assumed that $i_0 < i_1 < \cdots <i_p$. Clearly $\delta^2 = 0$ and hence $(\mathcal{C}(\Delta, \mathcal{V}), \delta)$ becomes a complex.
Moreover, when each $\mathcal{V}(\tau)$ has a ring structure, there is a natural product:
\begin{align*}
	\mathcal{C}^p(\Delta, \mathcal{V}) \times \mathcal{C}^q(\Delta, \mathcal{V}) \to \mathcal{C}^{p+q}(\Delta, \mathcal{V})
\end{align*}
given by
\begin{align*}
    (\alpha \cup \beta)(\tau_{i_0\cdots i_{p+q}}) := r_{\tau_{i_0\cdots i_{p+q}} \tau_{i_0\cdots  i_p}}\alpha(\tau_{i_0\cdots i_p}) \cdot r_{\tau_{i_0\cdots i_{p+q}} \tau_{i_p\cdots  i_{p+q}}}\beta(\tau_{i_p\cdots i_{p+q}})
\end{align*}
It's well-known that this product structure descends to cohomologies.

In the remaining part of this paper, $\Delta$ will always be the $(s-1)$-simplex with vertices $1,2,\cdots,s$, where $s$ is the number of cones in $\cS$.
%We will also write $X_\tau$ (or $Y_\tau$) for $X_{i_0\cdots i_p}$ (or $Y_{i_0\cdots i_p}$) if $\{i_0,\cdots, i_p\}$ is the vertices set of $\tau$.
To avoid confusion, we will reserve the letter $\sigma$ for cones of $\Sigma$ and letter $\tau$ for subsimplexes of $\Delta$.
Given $\tau \in \Delta$, there is a cone $\sigma_\tau \in \Sigma$ defined as the intersection of $\sigma_i$ with $i \in \tau$.
The map $\tau \mapsto {\sigma_\tau}$ is inclusion reversing: $\tau' \subset \tau$ implies $\sigma_\tau \subset \sigma_{\tau'}$.
Note that it is not necessarily one-to-one, since $\sigma_\tau$ for different $\tau$ may be identical.

Let $\mathcal{V} = \A^k$, by which we mean $\mathcal{V}(\tau)$ is the space $\A^k(X_{\sigma_\tau})$ of smooth complex differential $k$-forms on $X_{\sigma_\tau}$ and $r$ is the natural restriction map.
We have the following complex
\begin{align}\label{smooth exact}
    0 \to \A^k(X_\Sigma) \stackrel{r}{\to} \mathcal{C}^0(\Delta, \A^k) \stackrel{\delta}{\to} \mathcal{C}^1(\Delta, \A^k) \stackrel{\delta}{\to} \mathcal{C}^2(\Delta, \A^k) \stackrel{\delta}{\to} \cdots.
\end{align}
Because of partition of unity, it is exact.

Take $\mathcal{V} = \cO$, by which we mean $\mathcal{V}(\tau)$ is the space $\cO(Y_{\sigma_\tau})$ of algebraic functions on $Y_{\sigma_\tau}$ and $r$ is the restriction map.
Then we have the following complex
\begin{align}\label{algebraic model}
    0 \to \cO(Y_\Sigma) \stackrel{r}{\to} \mathcal{C}^0(\Delta, \cO) \stackrel{\delta}{\to} \mathcal{C}^1(\Delta, \cO) \stackrel{\delta}{\to} \mathcal{C}^2(\Delta, \cO) \stackrel{\delta}{\to} \cdots.
\end{align}
Though partition of unity is no longer available here, we will prove an exactness result that is parallel to (\ref{smooth exact}).
The following observation is of fundamental importance.
Given $Y_\sigma$ and $Y_{\sigma'}$, since each of them is an intersection of coordinate hyperplanes, there exist a canonical projection $p: \C^d \to Y_\sigma$ and an inclusion $i: Y_{\sigma'} \to \C^d$.
Similarly, there exist an inclusion $\hat i: Y_\sigma \cap Y_{\sigma'} \to Y_\sigma$ and a projection $\hat p: Y_{\sigma'} \to Y_\sigma \cap Y_{\sigma'}$.
The observation is that given any function $g$ on $ Y_\sigma$, $$i^* p^* g = \hat p^* \hat i^* g$$ as functions on $ Y_{\sigma'}$.
In this way, a function $g$ on $Y_\sigma$ determines a function on $Y_\Sigma$ in an unambiguous way, which by abuse of notation is still denoted by $g$.

\begin{theorem} \label{algebraic exact}
    The complex (\ref{algebraic model}) is exact.
\end{theorem}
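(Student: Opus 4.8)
The plan is to put the obvious $\N^d$-grading on the whole complex and reduce the statement to the vanishing of the reduced cohomology of a simplex. Each $\cO(Y_\sigma)$ is the polynomial ring $\C[z_i : \rho_i \in \sigma]$, whose degree-$a$ piece (for $a \in \N^d$) is one-dimensional when $\rho_i \in \sigma$ for every $i$ in $\mathrm{supp}(a) := \{i : a_i > 0\}$, and zero otherwise. Every restriction map $r_{\tau\tau'}$, every \v{C}ech differential $\delta$, and the augmentation $r \colon \cO(Y_\Sigma) \to \mathcal{C}^0$ are homogeneous for this grading, so it is enough to prove that the complex (\ref{algebraic model}) is exact in each fixed multidegree $a$.

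First I would describe the degree-$a$ subcomplex explicitly. For fixed $a$, the summand $\mathcal{V}(\tau) = \cO(Y_{\sigma_\tau})$ contributes a copy of $\C$ in degree $a$ precisely when $\rho_i \in \sigma_\tau = \bigcap_{j \in \tau}\sigma_j$ for all $i \in \mathrm{supp}(a)$, i.e.\ precisely when $\tau \subseteq S_a$, where $S_a := \{\, j : \rho_i \in \sigma_j \text{ for all } i \in \mathrm{supp}(a)\,\}$. On these one-dimensional pieces the restriction maps are the identity $\C \to \C$ whenever they are nonzero, so $(\mathcal{C}^\bullet_a, \delta)$ is exactly the ordered simplicial cochain complex, with $\C$ coefficients, of the full simplex on the vertex set $S_a$.

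It remains to identify the augmentation and conclude. The degree-$a$ part of $\cO(Y_\Sigma) = R(\Sigma)$ is $\C$ exactly when $z^a \neq 0$ in the Stanley-Reisner ring, that is, when $\mathrm{supp}(a)$ contains no primitive collection, equivalently when the rays indexed by $\mathrm{supp}(a)$ span a cone of $\Sigma$. Using that $\cS$ covers $|\Sigma|$ I would show this holds if and only if $S_a \neq \emptyset$: if those rays span a cone $\sigma$, a point in its relative interior lies in some $\sigma_j \in \cS$, and a point of $\mathrm{relint}\,\sigma$ can lie in the face $\sigma \cap \sigma_j$ of $\sigma$ only if $\sigma \subseteq \sigma_j$, forcing $j \in S_a$; the converse is immediate. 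Hence in degree $a$ the augmented complex is either identically zero (when $S_a = \emptyset$) or the augmented simplicial cochain complex of a nonempty simplex, with augmentation the diagonal $\C \to \bigoplus_{j \in S_a}\C$. A simplex is contractible, so its reduced cohomology vanishes and the augmented complex is exact; summing over $a$ gives exactness of (\ref{algebraic model}).

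The one genuinely geometric step, and the place I expect the only real work, is the fan-theoretic claim that a cone of $\Sigma$ contained in $|\Sigma|$ must sit inside one of the covering cones $\sigma_j \in \cS$; everything else is formal once the grading is set up. This argument is the algebraic shadow of the partition-of-unity proof of the exactness of (\ref{smooth exact}): the canonical extension of the fundamental observation is what guarantees that the augmentation is homogeneous and that each graded piece collapses to the cochain complex of a contractible simplex, playing here the role that partitions of unity play there.
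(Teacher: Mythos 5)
Your proof is correct, and it takes a genuinely different route from the paper's. You split the complex into its $\N^d$-graded pieces: every term is a quotient of $\C[z_1,\dots,z_d]$ by a squarefree monomial ideal and every map is degree-preserving, so it suffices to check each multidegree $a$, where the complex collapses to the augmented ordered cochain complex of the full subsimplex of $\Delta$ on the vertex set $S_a$ (or to the zero complex when $S_a=\emptyset$, which you correctly match with the vanishing of $z^a$ in $R(\Sigma)$ — this is the one place the hypothesis that $\cS$ covers $|\Sigma|$ enters, via your observation that a cone of $\Sigma$ must be a face of some $\sigma_j\in\cS$, proved by applying the fan axiom to a relative interior point). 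Exactness then follows from the contractibility of a nonempty simplex. The paper instead works directly with functions: exactness at $p=0$ is an explicit inclusion--exclusion formula for the glued function, and exactness at $p>0$ is an induction that repeatedly corrects a cocycle to vanish on deeper strata of $Y_\Sigma$, using the vanishing of $H^{>0}(\Delta;\C)$ fiberwise together with a separate lemma that solutions of linear systems can be chosen to depend algebraically on parameters. Both arguments ultimately rest on the acyclicity of the simplex, but your multigraded decomposition (the standard Hochster-style device from combinatorial commutative algebra) is shorter and more structural: it dispenses with the parametrized-solution lemma and the stratification induction, and it yields the finer statement that the complex is exact degree by degree. What the paper's argument buys in exchange is that it does not lean on the monomial structure quite so explicitly and is phrased so as to parallel the partition-of-unity proof of the exactness of the smooth complex.
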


The theorem is not completely new to experts, see for example the Appendix A.2 of \cite{GS1}. 
We will provide an elementary proof in the appendix.

\section{A differential graded algebra that computes $H^\bullet(X_\Sigma)$}

We will construct in this section a dga whose cohomology ring is isomorphic to $H^\bullet(X_\Sigma)$.

\subsection{Local model}
We begin by describing the local model.
Let $\sigma$ be the cone in $\Sigma$ with generating rays $\rho_1,\cdots,\rho_k$.
The de Rham complex $(\mathcal{A}^\bullet(X_\sigma) ,\td)$ computes the cohomology of $X_\sigma$. 
Since $\Sigma$ is smooth, $\rho_1,\cdots,\rho_k$ can be complemented into a $\Z$-basis $\rho_1,\cdots,\rho_n$ of $N$. Let $l_1,\cdots,l_n$ be the dual basis in $M$ with the property that $l_i (\rho_j) = \delta_{ij}$, then the dual cone $\sigma^\vee$ of $\sigma$ has generators $l_1,\cdots,l_k,\pm l_{k+1},\cdots,\pm l_n$ and
\begin{align*}
	X_{\sigma}
	=\, \text{Spec\,} \C[l_1,\cdots,l_k] \times \text{Spec\,} \C[\pm l_{k+1},\cdots, \pm l_n].
\end{align*}
%The projection $p: X_{\sigma} \to \Spec \, \C[\pm l_{k+1},\cdots, \pm l_n]$ induces isomorphism of cohomology ring.
Let $W_\sigma^\bullet$ be the wedge algebra over the $\C$-vector space with generators $l_{k+1},\cdots,\l_n$.
Define
\begin{align*}
	j_\sigma: W_\sigma^1 \ni l \to \frac{\td w^l}{w^l} \in \A^1(X_\sigma)
\end{align*}
where $w^l$ is the monomial in $\C[\pm l_{k+1},\cdots, \pm l_n]$ corresponding to $l$. 
Using the fact that the $\frac{\td w^{l_i}}{w^{l_i}}$'s generate the cohomology ring of $X_\sigma$, one sees that $j_\sigma$ extends to
\begin{align*}
	j_\sigma: (W_\sigma^\bullet, 0) \to (\mathcal{A}^\bullet(X_\sigma) ,\td),
\end{align*}
which is a quasi-isomorphism of differential graded algebras.

Denote by $\Lambda^\bullet$ the exterior algebra of $M_\R \otimes \C$, and let $\xi_1,\cdots,\xi_n$ be a $\Z$-basis of $M$.
We can associate to $Y_\sigma$ a complex
\begin{align*}
	L(Y_\sigma) := (\cO(Y_\sigma) \otimes_\C \Lambda^\bullet, d_L)
\end{align*}
with
\begin{align*}
	d_L(g \otimes \alpha) = \sum_{i=1}^n (\sum_{j=1}^k \xi_i(\rho_j)z_j) g \otimes \frac{\partial}{\partial \xi_i} \alpha,
\end{align*}
where $\frac{\partial}{\partial \xi_i}$ is an odd derivation acting on  $\Lambda^\bullet$ and satisfying $\frac{\partial}{\partial \xi_i}(\xi_j) = \delta_{ij}$.
Because $d_L$ acts as a derivation, $L_\sigma$ has the structure of a dga.
It's easy to see that $d_L$ is independent of the choice of basis $\xi_i's$, so we can choose the basis $l_1,\cdots,l_n$ describe above, under which $d_L$ becomes
\begin{align} \label{explicit d_L}
    d_L(g \otimes \alpha) = \sum_{j=1}^k z_j g \otimes \frac{\partial}{\partial l_j} \alpha.
\end{align}
Here we equip $L(Y_\sigma)$ with a nontrivial gradation: we assign $\deg z_i=2$ and $\deg \xi_i=1$ so that $d_L$ acts as a differential of degree $1$.
Such a grading may seem strange, however, as we will see in the next section, it's compatible with definition of $\{f,-\}$ on logarithmic polyvector fields and then finally compatible with the cohomological degree of $H(X_\Sigma)$.
Let $U_\sigma^\bullet$ be the wedge algebra over $\C$ with generators $l_1,\cdots,l_k$, then we have $\Lambda^\bullet = W_\sigma^\bullet \otimes U_\sigma^\bullet$ and $L(Y_\sigma)$ is the tensor product of $(W_\sigma^\bullet, 0)$ and $(\cO(Y_\sigma) \otimes U_\sigma^\bullet, d_L)$.
The latter is a Koszul complex. 
As $z_1,\cdots,z_k$ forms a regular sequence in $\cO(Y_\sigma)$, $(\cO(Y_\sigma) \otimes U_\sigma^\bullet, d_L)$ has cohomology $\C \cdot 1$.
One concludes that $L(Y_\sigma)$ has cohomology $W_\sigma^\bullet$ and the natural inclusion
\begin{align*}
	k_\sigma: (W_\sigma^\bullet, 0) \to L(Y_\sigma)
\end{align*}
is a quasi-isomorphism of dga.

\subsection{\v{C}ech model}

Next we assemble the constructions above to maps between \v{C}ech complexes, for which we need to define compatible restriction maps between local models.

Let $\sigma \subset \sigma'$ be cones in $\Sigma$, we have $X_\sigma \subset X_{\sigma'}$ and $Y_\sigma \subset Y_{\sigma'}$.
For $(\mathcal{A}^\bullet(X_\sigma) ,\td)$, the restriction map 
\begin{align*}
	r_{\sigma \sigma'}^\A: \mathcal{A}^\bullet(X_{\sigma'}) \to \mathcal{A}^\bullet(X_\sigma)
\end{align*}
is compatible with de Rham differential $\td$.
As to $L(Y_\sigma)$, we define
\begin{align*}
	r_{\sigma \sigma'}^L: L(Y_{\sigma'}) \to L(Y_\sigma),\quad g \otimes \alpha \mapsto g|_{Y_\sigma} \otimes \alpha.
\end{align*}
By the explicit expression (\ref{explicit d_L}) of $d_L$ one sees $r_{\sigma \sigma'}^L$ is indeed a chain map.
For $(W_\sigma,0)$, one observes that $W_\sigma^1$ is exactly the annihilator of $\sigma$ in $M_\R \otimes \C$.
When $\sigma \subset \sigma'$, we have $W_{\sigma'}^1 \subset W_\sigma^1$. So we can define 
\begin{align*}
	r_{\sigma \sigma'}^W: W_{\sigma'}^\bullet \to W_\sigma^\bullet
\end{align*}
as the inclusion map.
It's as expected that the maps $j_\sigma$ and $k_\sigma$ defined above are compatible with the various restriction maps:
the identities
\begin{align*}
	r_{\sigma\sigma'}^A \circ j_{\sigma'} = j_\sigma \circ r_{\sigma\sigma'}^W: (W_{\sigma'}^\bullet, 0) \to (\mathcal{A}^\bullet(X_\sigma),\td)
\end{align*}
and
\begin{align*}
	r_{\sigma\sigma'}^L \circ k_{\sigma'} = k_\sigma \circ r_{\sigma\sigma'}^W: (W_{\sigma'}^\bullet, 0) \to L(Y_\sigma)
\end{align*}
both follow from the explicit constructions above.

Define $\A:= \oplus \A^{k,p}$ with $\A^{k,p} := \oplus_{\tau \in \Delta_p} \A^k(X_{\sigma_\tau})$.
Using that restriction maps are compatible with the differential map, we obtain a \v{C}ech double complex $(\A,\delta,\td)$.
By the exactness of (\ref{smooth exact}), we have the following quasi-isomorphism
\begin{align}
	(\A^\bullet(X_\Sigma),\td) \hookrightarrow (\A,\delta,\td).
\end{align}

Similar to the definition of $L(Y_\sigma)$, we can define a dga on $Y_\Sigma$ as
\begin{align*}
	L(Y_\Sigma) :=& (\cO(Y_\Sigma)\otimes \Lambda^\bullet, d_L), \\ d_L(g \otimes \alpha) =& \sum_{i=1}^n (\sum_{j=1}^d \xi_i(\rho_j)z_j) g \otimes \frac{\partial}{\partial \xi_i} \alpha.
\end{align*}
We can also define a chain map
\begin{align*}
	r_\sigma: L(Y_\Sigma) \to L(Y_\sigma),\quad g \otimes \alpha \mapsto g|_{Y_\sigma} \otimes \alpha.
\end{align*}
Write $L:= \oplus L^{k,p}$ with $L^{k,p} := \oplus_{\tau \in \Delta_p} \cO(Y_{\sigma_\tau}) \otimes \Lambda^k$, we obtain a \v{C}ech double complex $(L,\delta,d_L)$.
By Theorem \ref{algebraic exact}, the complex
\begin{align*}
	0 \to \cO(Y_\Sigma) \otimes_\C \Lambda^k \stackrel{r}{\to} \mathcal{C}^0(\Delta, \cO \otimes_\C \Lambda^k) \stackrel{\delta}{\to} \mathcal{C}^1(\Delta, \cO \otimes_\C \Lambda^k) \stackrel{\delta}{\to}  \cdots,
\end{align*}
is exact for $\forall k$, so we get another quasi-isomorphism
\begin{align} \label{Alg quasi-isomorphism}
	L(Y_\Sigma) \hookrightarrow (L, \delta, d_L).
\end{align}

\begin{proposition} \label{quasi-isomorphism}
	$L(Y_\Sigma)$ and $(\A(X_\Sigma),\td)$ are quasi-isomorphic dga's, hence  $L(Y_\Sigma)$ computes the $\C$-valued cohomology ring of $X_\Sigma$.
\end{proposition}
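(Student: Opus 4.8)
The plan is to realize the promised third dga as the \v{C}ech double complex of the presheaf $\tau \mapsto W^\bullet_{\sigma_\tau}$, and to use it as a common source dominating both $(\A,\delta,\td)$ and $(L,\delta,d_L)$. Concretely, set $W^{k,p} := \bigoplus_{\tau \in \Delta_p} W^k_{\sigma_\tau}$ and equip $W := \bigoplus_{k,p} W^{k,p}$ with the \v{C}ech differential $\delta$ assembled from the inclusions $r^W_{\sigma\sigma'}$ and with zero vertical differential; the \v{C}ech cup product together with the wedge product on each $W^\bullet_{\sigma_\tau}$ makes the total complex $(W,\delta)$ into a dga with the grading in which $W^k_{\sigma_\tau}$ sits in degree $k$. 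Because the local maps $j_\sigma$ and $k_\sigma$ are morphisms of dga's that commute with the restriction maps --- precisely the identities $r^\A_{\sigma\sigma'} \circ j_{\sigma'} = j_\sigma \circ r^W_{\sigma\sigma'}$ and $r^L_{\sigma\sigma'} \circ k_{\sigma'} = k_\sigma \circ r^W_{\sigma\sigma'}$ recorded above --- applying $j_{\sigma_\tau}$ (resp. $k_{\sigma_\tau}$) componentwise over $\tau \in \Delta_p$ defines morphisms of double complexes
\begin{align*}
	J: (W,\delta) \to (\A,\delta,\td), \qquad K: (W,\delta) \to (L,\delta,d_L).
\end{align*}
Compatibility with the two restriction systems guarantees that $J$ and $K$ commute with $\delta$, while compatibility of $j_\sigma, k_\sigma$ with the products makes $J$ and $K$ morphisms of dga's; both preserve the degree because $j_{\sigma_\tau}$ and $k_{\sigma_\tau}$ do.

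The heart of the argument is to show that $J$ and $K$ are quasi-isomorphisms on total complexes. I would filter each double complex by the \v{C}ech degree $p$ and compare the resulting spectral sequences. On the $E_0$-level the vertical differential of $W$ vanishes, that of $\A$ is $\td$, and that of $L$ is $d_L$; hence the $E_1$-pages are $\bigoplus_{\tau \in \Delta_p} W^k_{\sigma_\tau}$, $\bigoplus_{\tau \in \Delta_p} H^k(\A^\bullet(X_{\sigma_\tau}),\td)$, and $\bigoplus_{\tau \in \Delta_p} H^k(L(Y_{\sigma_\tau}),d_L)$ respectively. Since $j_{\sigma_\tau}$ and $k_{\sigma_\tau}$ are quasi-isomorphisms, $J$ and $K$ induce isomorphisms on every summand, and because they commute with $\delta$ they induce isomorphisms of the full $E_1$-pages together with the induced differential $d_1$. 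The filtration by \v{C}ech degree is finite, since $0 \le p \le s-1$, so both spectral sequences converge and the comparison theorem upgrades the $E_1$-isomorphisms to quasi-isomorphisms of the totalizations. Equivalently, one may argue directly: $J$ and $K$ restrict to quasi-isomorphisms on each $p$-th column, and a morphism of double complexes with finitely many columns that is a columnwise quasi-isomorphism induces a quasi-isomorphism on total complexes.

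It then remains only to assemble the comparison. Combining $J$ and $K$ with the two quasi-isomorphisms $(\A^\bullet(X_\Sigma),\td) \hookrightarrow (\A,\delta,\td)$ and (\ref{Alg quasi-isomorphism}) produced above yields the zig-zag of dga quasi-isomorphisms
\begin{align*}
	(\A^\bullet(X_\Sigma),\td) \hookrightarrow (\A,\delta,\td) \xleftarrow{\ J\ } (W,\delta) \xrightarrow{\ K\ } (L,\delta,d_L) \hookleftarrow L(Y_\Sigma),
\end{align*}
in which every arrow is a morphism of dga's inducing an isomorphism on cohomology. Passing to cohomology, each arrow becomes an isomorphism of graded rings, so $H^\bullet(L(Y_\Sigma)) \cong H^\bullet(\A^\bullet(X_\Sigma),\td) = H^\bullet(X_\Sigma,\C)$ as graded rings, which is the assertion. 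I expect the only genuinely delicate point to be the spectral-sequence comparison of the previous paragraph --- in particular keeping track of the nonstandard grading on $L$ so that $J$, $K$ remain degree-preserving. The two inputs that power the comparison, namely the local dga quasi-isomorphisms $j_\sigma, k_\sigma$ and the algebraic exactness of Theorem \ref{algebraic exact}, have already been established, so the remainder is a formal, if careful, bookkeeping of differentials and products.
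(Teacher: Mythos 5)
Your proposal is correct and follows essentially the same route as the paper: the paper likewise introduces the auxiliary double complex $(W,\delta,0)$ built from the $W^\bullet_{\sigma_\tau}$, assembles $j_\sigma$ and $k_\sigma$ into maps of double complexes, and concludes by comparing vertical cohomologies via the spectral sequence of the \v{C}ech filtration before splicing in the two augmentation quasi-isomorphisms. The only difference is that you spell out the convergence and multiplicativity bookkeeping in more detail than the paper does.
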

\begin{proof}
	By the discussions above, it suffices to show the double complexes $(\A,\delta,\td)$ and $(L, \delta, d_L)$ are quasi-isomorphic as dga's.
	To show this, we define a third double complex $(W,\delta,0)$ with
	\begin{align*}
		W:= \oplus W^{k,p},\quad W^{k,p}:= \oplus_{\tau \in \Delta_p} W^k_{\sigma_\tau}.
	\end{align*}
	Since $j_\sigma$ and $k_\sigma$ are compatible with restriction maps, they can be assembled into maps of double complexes
	\begin{align*}
		j: (W,\delta,0) \to (\A,\delta,\td)
	\end{align*}
	and 
	\begin{align*}
		k: (W,\delta,0) \to (L, \delta, d_L).
	\end{align*}
	Since $j_\sigma$'s and $k_\sigma$'s are all quasi-isomorphisms, both $j$ and $k$ induces isomorphic vertical cohomology groups. By the theory of spectral sequence, both $j$ and $k$ are quasi-isomorphisms and hence $(\A,\delta,\td)$ and $(L, \delta, d_L)$ are quasi-isomorphic double complexes.
	Moreover, it's obvious that the all various chain maps preserve the product structure on \v{C}ech complexes, so $(\A,\delta,\td)$ and $(L, \delta, d_L)$ are indeed quasi-isomorphic as dga's.
\end{proof}

\section{Landau-Ginzburg mirror for semiprojective toric manifolds}

The results in the previous sections are valid for all toric manifolds. 
In this section we will interpret $L(Y_\Sigma)$ as the twisted complex of a Landau-Ginzburg model, under the additional assumption that $X_\Sigma$ is semi-projective.
By Proposition 7.2.9 of \cite{CLS}, $X_\Sigma$ being semi-projective is equivalent to saying $X_\Sigma$ is the toric variety of a full dimensional lattice polyhedron $P \subset M_\R$. 
As the normal fan of $P$, $\Sigma$ has full dimensional convex support.
We can define a strictly convex $\Sigma$-piecewise linear function with integral slopes on $|\Sigma|$ as follows
\begin{align*}
    \varphi(n) := - \text{inf}\,\{(m,n)| m \in P\}.
\end{align*}
Consider the subset of $N_{\R}\oplus \R$ defined as
\begin{equation*}
    Q =\{(n,r) | n \in |\Sigma|, r\geq \varphi(n)\},
\end{equation*}
this is a strictly convex rational polyhedral cone and hence defines an affine toric variety
\begin{equation*}
    \mathscr{Y} = \text{Spec}(\C[Q \cap (N\oplus \Z)]).
\end{equation*}

Let $Q_v$ be the set of vertical components of the boundary $\partial Q$ of $Q$.
Such component will  appear only when $|\Sigma|$ is a proper subset of $N_\R$.
They define a divisor $\mathcal{D}$ in $\mathscr{Y}$.
Denote by $Q_h$ the set of horizontal components of $\partial Q$, it defines another divisor $Y$ in $\mathscr{Y}$.
To identify $Y$, consider the natural inclusion $\N \hookrightarrow Q \cap (N\oplus \Z)$ given by $k \mapsto (0,k)$.
This induces a map $\pi: \mathscr{Y} \to \C = \text{Spec}(\C[\N])$, and $Y$ is exactly the fiber over $0$.
In fact, $Y$ is isomorphic to $Y_\Sigma$.
To see this, we define a monoid structure on $(Q_h \cap (N\otimes \Z)) \cup \{\infty\}$ as follows:
\begin{eqnarray}
    p+q= \left\{
    \begin{aligned}
        p+q \quad& \text{if }p,q,p+q \in Q_h; \\
        \infty \quad& \text{otherwise}.
    \end{aligned}
    \right.
\end{eqnarray}
By formally putting $z^\infty=0$, we have $Y=\Spec\,\C[(Q_h \cap (N\otimes \Z)) \cup \{\infty\}]$.
Let $p: Q_h\cap (N\otimes \Z) \to \Sigma \cap N$ be the canonical projection induced by $p: N\oplus \Z \to N$, which is one-to-one and induces a monoid structure on $\overline \Sigma := (\Sigma \cap N) \cup \{\infty\}$.
We have $Y \cong \text{Spec\,}\C[\overline \Sigma]$ and hence
\begin{equation*}
    Y\cong Y_\Sigma
\end{equation*}
by the definition of $Y_\Sigma$.

To sum up, we get a flat morphism $\pi: \mathcal{Y} \to \C$ together with a reduced divisor $\mathcal{D} \subset \mathcal{Y}$.
It is easy to see they form a toric degeneration of CY-pairs over $\C$ in the sense of Definition 1.8 in \cite{GS}.
In particular, if we write $D := \mathcal{D} \cap Y_\Sigma$, then the pair $(Y_\Sigma,D)$ forms a totally degenerate CY-pair.

In the following we need some knowledge of log geometry, which can be found in \cite{K}.
Equip $\mathscr{Y}$ with the log structure induced by the inclusion $Y_\Sigma \cup \mathcal{D} \hookrightarrow \mathscr{Y}$ and equip $\C$ with the log structure induced by the inclusion $0 \hookrightarrow \C$.
Using the superscript $\dag$ to emphasize that a variety is equipped with a given log structure, we get log smooth maps
\begin{equation*}
    \Phi:\mathscr{Y}^\dag \to \C^\dag
\end{equation*}
and
\begin{equation*}
    \phi: Y_\Sigma^{\dag} \to 0^\dag,
\end{equation*}
where the latter is obtained by pulling back the log structure on corresponding ambient spaces.

\begin{lemma}
    The map $\phi: Y^{\dag} \to 0^\dag$ is log smooth.
\end{lemma}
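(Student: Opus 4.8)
The plan is to exhibit $\phi$ as a base change of the log smooth morphism $\Phi$ and then invoke the stability of log smoothness under base change (Kato, \cite{K}). The key structural observation is that the inclusion $0^\dag \hookrightarrow \C^\dag$ is \emph{strict}: by construction the log structure on $0^\dag$ is the pullback of $\mathcal{M}_{\C^\dag}$, namely the standard log point with ghost monoid $\N$. Consequently the fine-log fiber product $\mathscr{Y}^\dag \times_{\C^\dag} 0^\dag$ requires no integralization, so its underlying scheme is the naive fiber $\pi^{-1}(0)$ and its log structure is the restriction of $\mathcal{M}_{\mathscr{Y}}$ --- which is exactly the log structure the paper puts on $Y^\dag$. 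Thus the square
\[
\begin{CD}
Y^\dag @>>> \mathscr{Y}^\dag\\
@VV{\phi}V @VV{\Phi}V\\
0^\dag @>>> \C^\dag
\end{CD}
\]
is Cartesian in fine log schemes, and $\phi$ is literally the base change of $\Phi$.

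Granting this, I would first record why $\Phi$ is log smooth, since it is the essential input. Both $\mathscr{Y}^\dag$ and $\C^\dag$ carry their canonical toric log structures (from the full toric boundary $Y_\Sigma \cup \mathcal{D}$, which accounts for every facet of $Q$, and from the origin, respectively), and $\Phi$ is the toric morphism induced by the monoid map $\N \to Q \cap (N \oplus \Z)$, $1 \mapsto (0,1)$. I would then verify Kato's chart criterion for this map: the induced homomorphism $\N^{gp} = \Z \to N \oplus \Z = (Q \cap (N \oplus \Z))^{gp}$ is injective with cokernel $N$, which is torsion free, so the kernel and the torsion of the cokernel both vanish; and the induced map on underlying schemes to $\C \times_{\Spec \Z[\N]} \Spec \Z[Q \cap (N \oplus \Z)]$ is an isomorphism. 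Hence $\Phi$ is log smooth.

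It then remains to justify the fiber computation used above, and this is where I expect the only real subtlety to lie. I would compute the scheme-theoretic central fiber directly: writing $P = Q \cap (N \oplus \Z)$ and $t = z^{(0,1)}$, one has $\pi^{-1}(0) = \Spec\big(\C[P]/(t)\big)$, and the monomials surviving in the quotient are precisely the $z^p$ with $p \in P \setminus ((0,1) + P)$, that is, the lattice points lying on the horizontal faces $Q_h$. Under the projection $p \colon Q_h \cap (N \oplus \Z) \to \Sigma \cap N$ this recovers $\Spec \C[\overline{\Sigma}] = Y_\Sigma$, so the scheme-theoretic fiber is reduced and equals $Y$. The main obstacle is thus conceptual rather than computational: one must be certain that passing to the central fiber in the category of fine log schemes introduces neither nonreduced structure nor an unexpected log structure. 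Strictness of $0^\dag \hookrightarrow \C^\dag$ is exactly what removes this danger; it is equivalent to the assertion that the canonical map $Y \to 0^\dag \times_{\Spec \Z[\N]} \Spec \Z[P]$ is an isomorphism, which is just the identity of $\Spec\big(\C[P]/(t)\big)$ with itself. Verifying this last isomorphism gives, alternatively, a self-contained proof of log smoothness of $\phi$ directly through Kato's chart criterion, bypassing the base-change theorem altogether.
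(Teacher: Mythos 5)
Your argument is correct and is essentially the paper's: the paper's entire proof is the citation ``It follows from Theorem 4.1 of \cite{K}'', i.e.\ Kato's chart criterion applied to $\phi$ via the chart $\N \to Q\cap(N\oplus\Z)$, which is precisely the self-contained route you describe in your closing sentence (and which your base-change framing reduces to anyway). You have simply supplied the verifications --- the identification of the divisorial log structure with the toric one, the injectivity and torsion-free cokernel of $\Z \to N\oplus\Z$, and the reducedness of $\Spec(\C[P]/(t)) \cong Y_\Sigma$ --- that the paper leaves implicit.
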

\begin{proof}
It follows from Theorem 4.1 of \cite{K}.
\end{proof}

Since $\phi$ is log smooth, the sheaves $\Omega^\bullet_{Y_\Sigma^\dag / 0^\dag}$ and $\Theta^\bullet_{Y_\Sigma^\dag / 0^\dag}$ of relative log differentials and log derivations are locally free on $Y_\Sigma$.
These sheaves can be described explicitly as follows.
By the semi-projectivity of $X_\Sigma$, $|\Sigma|$ is covered by $n$-dimensional cones. Let $\sigma_M$ be one of them.
By smoothness of $\Sigma$, we can assume without loss of generality that $\rho_1=e_1,\cdots,\rho_n=e_n$ is the set of generating rays of $\sigma_M$.

Assume that when $n+1 \leq l \leq d$,
\begin{equation*}
    \rho_l - \sum_{i=1}^n a_{il} \rho_i = 0.
\end{equation*}
By construction of $\pi: \mathscr{Y} \to \C$, we have
\begin{equation*}
    z_l \cdot \prod_{i=1}^n z_i^{-a_{il}} = t^m
\end{equation*}
for some $m \in \Z_+$, where $t$ is the generator of $\Spec \, \C[\N] \cong \C[t]$.
Hence
\begin{equation} \label{relation of log differential}
    \frac{\td z_l}{z_l} - \sum_{i=1}^n a_{il}\cdot \frac{\td z_i}{z_i}= m \frac{\td t}{t}, \quad \forall n+1 \leq l \leq d
\end{equation}
and $\Omega^1_{Y_\Sigma^\dag / 0^\dag}$ is defined as the $\cO_{Y_\Sigma}$-module $\oplus_{i=1}^d \cO_{Y_\Sigma} \frac{\td z_i}{z_i}$ modulo the relations
\begin{equation} \label{relation of log differential}
    \frac{\td z_l}{z_l} - \sum_{i=1}^n a_{il}\cdot \frac{\td z_i}{z_i}= 0, \quad \forall n+1 \leq l \leq d.
\end{equation}
We define $\Omega^k_{Y_\Sigma^\dag / 0^\dag} := \wedge^k \Omega^1_{Y_\Sigma^\dag / 0^\dag}$.
On any $n$-dimensional component $Y_\sigma$ of $Y_\Sigma$, any $\alpha \in \Omega^k_{Y_\Sigma^\dag / 0^\dag}$ has a unique representative $\alpha|_{Y_\sigma}$ given by an algebraic $k$-form with at most log poles on $Y_\sigma \cap \text{Sing}\, Y_\Sigma$. 
If $Y_{\sigma_i}$ and $Y_{\sigma_j}$ has $(n-1)$-dimensional intersection $D_{ij}$, then
\begin{equation*}
    \Res_{D_{ij}} \alpha|_{Y_{\sigma_i}} + \Res_{D_{ij}} \alpha|_{Y_{\sigma_j}} = 0,
\end{equation*}
where $\Res$ denotes the Poincar\'e residue map.
Write $\theta_i := z_i\frac{\pa}{\pa z_i}$, then $\Theta^1_{Y_\Sigma^\dag / 0^\dag}$ is the $\cO_{Y_\Sigma}$-submodule
\begin{equation} \label{relation of log derivation}
    \{\sum_{i=1}^d g_i \theta_i \big| g_l = \sum_{i=1}^n g_i a_{il}, \forall n+1 \leq l \leq d\}
\end{equation}
of $\oplus_{i=1}^d \cO_{Y_\Sigma}\theta_i$ and $\Theta^k_{Y_\Sigma^\dag / 0^\dag} = \wedge^k \Theta^1_{Y_\Sigma^\dag / 0^\dag}$.
We will call global sections of $\Theta^\bullet_{Y_\Sigma^\dag / 0^\dag}$ logarithmic polyvector fields.
More explicitly, $\Theta^1_{Y_\Sigma^\dag / 0^\dag}$ is a rank $n$ free $\cO_{Y_\Sigma}$-module with generators $\tilde\theta_i, 1\leq i \leq n$, where $\tilde\theta_i = \theta_i + \sum_{l=n+1}^d a_{il} \theta_l$.
The map $\tilde \theta_i \mapsto \theta_i$ gives an isomorphism
\begin{equation}\label{identification}
\Theta^1_{Y_\Sigma^\dag / 0^\dag} \cong \bigoplus_{1\leq i \leq n} \cO_{Y_\Sigma} \cdot \theta_i.
\end{equation}

There is a global nowhere vanishing log differential $n$-form $\Omega$ such that it can be written as $\wedge_{i=1}^n \frac{\td z_i}{z_i}$ on $Y_{\sigma_M}$, hence $Y_\Sigma^\dag$ is in fact log Calabi-Yau.
Denote $f = z_1 +z_2 + \cdots +z_d$, then we get a logarithmic version of Landau-Ginzburg model
\begin{align*}
	(Y_\Sigma^\dag, f)
\end{align*}
and it will be regarded as the mirror of $X_\Sigma$.

\begin{remark}
	Such mirror of $X_\Sigma$ should be viewed as (generalization of) a limiting version of Hori-Vafa mirror in \cite{HV}.
	Take $X_\Sigma = \mathbb{P}^1$ for instance.
	The Hori-Vafa mirror of $\mathbb{P}^1$ is given by a family of Landau-Ginzburg models parametrized over $q \in \C^*$:
	\begin{align*}
		(\C^*, f=z+ \frac{q}{z}).
	\end{align*}
	The $q=0$ limit corresponds to the large complex structure limit.
	Rewrite the family as
	\begin{align*}
		(Y_q:=\{z_1z_2 = q\} \subset \C^2, f=z_1 +z_2),
	\end{align*}
	then when $q=0$, $Y_q$ becomes $\{z_1z_2 = 0\}$. This is exactly the variety $Y_\Sigma$ given by the defining fan of $\mathbb{P}^1$.
	The log structure on $Y_\Sigma$ and the log smooth map to $0^\dagger$ record the information that $Y_\Sigma$ is the central fiber of the family $\C^2 \to \C, (z_1,z_2) \mapsto z_1 z_2$.
\end{remark}

When written as a log differential, $\td f = \sum_{i=1}^d z_i \frac{\td z_i}{z_i}$ defines a map $\td f \wedge:\Omega^\bullet_{Y_\Sigma^\dag / 0^\dag} \to \Omega^{\bullet+1}_{Y_\Sigma^\dag / 0^\dag}$.
As usual, contraction with $\Omega$ defines a map
\begin{equation*}
    \Theta^k_{Y_\Sigma^\dag / 0^\dag} \to \Omega^{n-k}_{Y_\Sigma^\dag / 0^\dag}, \quad \theta \mapsto \theta \vdash \Omega.
\end{equation*}
It then induces from $\td f \wedge$ a map
$$
\{f,-\}:\Theta^\bullet_{Y_\Sigma^\dag / 0^\dag} \to \Theta^{\bullet-1}_{Y_\Sigma^\dag / 0^\dag}.
$$
$(\Theta^\bullet_{Y_\Sigma^\dag / 0^\dag}, \{f,-\})$ is a differential graded algebra since $\{f,-\}$ is a derivation.
Under the identification (\ref{identification}), we can write
$$
\{f,-\} = \sum_{i=1}^n (z_i + \sum_{l=n+1}^d a_{il}z_i) \frac{\pa}{\pa \theta_i},
$$
where $\frac{\pa}{\pa \theta_i}$ is an odd derivation satisfying $\frac{\pa}{\pa \theta_i} (\theta_j) = \delta_{ij}$.
We equip $(\Theta^\bullet_{Y_\Sigma^\dag / 0^\dag}, \{f,-\})$ with a grading structure by specifying $\deg z_i = 2$ and $\deg \theta_i = 1$ so that $\{f,-\}$ has degree $1$.

\begin{theorem}
When $X_\Sigma$ is a semi-projective toric manifold, then
\begin{enumerate}[1)]
    \item The dga $(\Theta^\bullet_{Y_\Sigma^\dag / 0^\dag} (Y_\Sigma), \{f,-\})$ is isomorphic to $L(Y_\Sigma)$;
    \item There is  a ring isomorphism
    \begin{align*}
    	H(\Theta^\bullet_{Y_\Sigma^\dag / 0^\dag} (Y_\Sigma), \{f,-\}) \cong H(X_\Sigma).
    \end{align*}
\end{enumerate}
\end{theorem}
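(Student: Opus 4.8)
The plan is to establish part 1) by writing down an explicit isomorphism of graded algebras in a lattice basis adapted to a maximal cone and checking that it intertwines the two differentials, and then to obtain part 2) formally by combining part 1) with Proposition \ref{quasi-isomorphism}.

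First I would fix the maximal cone $\sigma_M$ with generators $\rho_1 = e_1, \dots, \rho_n = e_n$ used to describe $\Theta^\bullet_{Y_\Sigma^\dag/0^\dag}$, and choose the dual $\Z$-basis $\xi_1, \dots, \xi_n$ of $M$, so that $\xi_i(\rho_j) = \delta_{ij}$ for $j \leq n$. Since $\rho_1, \dots, \rho_n$ is a $\Z$-basis of $N$, each remaining generator satisfies $\rho_l = \sum_{i=1}^n a_{il}\rho_i$, whence $\xi_i(\rho_l) = a_{il}$ for $l > n$. Substituting into the global formula for $d_L$ yields
\begin{align*}
\sum_{j=1}^d \xi_i(\rho_j)\,z_j = z_i + \sum_{l=n+1}^d a_{il}\,z_l .
\end{align*}
On the other side, reducing $\td f = \sum_{j=1}^d z_j \frac{\td z_j}{z_j}$ modulo the defining relations $\frac{\td z_l}{z_l} = \sum_{i=1}^n a_{il}\frac{\td z_i}{z_i}$ of $\Omega^1_{Y_\Sigma^\dag/0^\dag}$ gives $\td f = \sum_{i=1}^n \bigl(z_i + \sum_{l=n+1}^d a_{il}z_l\bigr)\frac{\td z_i}{z_i}$, so the coefficient of $\frac{\pa}{\pa\theta_i}$ in $\{f,-\}$ is exactly the coefficient of $\frac{\pa}{\pa\xi_i}$ in $d_L$. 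It is clarifying to record the basis-free form $d_L(g\otimes\alpha) = \sum_{j=1}^d z_j\, g \otimes \iota_{\rho_j}\alpha$, where $\iota_{\rho_j}$ is contraction by $\rho_j \in N$ on $\Lambda^\bullet = \wedge^\bullet (M_\R\otimes\C)$; this exhibits $d_L$ as the exact analogue of $\{f,-\}$, which corresponds under $\Omega$-duality to wedging with $\td f$.

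Next I would define the comparison map $\phi \colon L(Y_\Sigma) \to (\Theta^\bullet_{Y_\Sigma^\dag/0^\dag}(Y_\Sigma), \{f,-\})$ as the $\cO(Y_\Sigma)$-algebra homomorphism determined by $z_i \mapsto z_i$ and $\xi_i \mapsto \tilde\theta_i$, where the $\tilde\theta_i$ are the free generators of $\Theta^1$ and we use the identification (\ref{identification}). Because both sides are free graded-commutative $\cO(Y_\Sigma)$-algebras on $n$ generators in degree $1$, and $\phi$ carries the $\xi_i$ bijectively onto the $\tilde\theta_i$ while fixing the $z_i$, it is an isomorphism of graded algebras (the grading is preserved since $\deg\xi_i = \deg\theta_i = 1$ and $\deg z_i = 2$ on both sides). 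The one substantive check is that $\phi$ is a chain map, $\phi \circ d_L = \{f,-\} \circ \phi$; since $\phi \circ \frac{\pa}{\pa\xi_i} = \frac{\pa}{\pa\theta_i} \circ \phi$ and both differentials are $\cO(Y_\Sigma)$-linear odd derivations, this reduces to the coefficient identity of the previous paragraph. This proves 1).

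For 2), the dga isomorphism of 1) induces a ring isomorphism $H(\Theta^\bullet_{Y_\Sigma^\dag/0^\dag}(Y_\Sigma), \{f,-\}) \cong H(L(Y_\Sigma))$, and Proposition \ref{quasi-isomorphism} identifies the latter with $H(X_\Sigma)$ as rings; composing gives the claim. I do not expect a genuine obstacle, as all the content sits in part 1). The point deserving the most care is that the coefficient identity is an identity of \emph{global} regular functions on $Y_\Sigma$: this holds because $\rho_l = \sum_i a_{il}\rho_i$ is a relation in the lattice $N$, so $\xi_i(\rho_l) = a_{il}$ holds unconditionally and the identity is already valid in $\C[z_1,\dots,z_d]$ before restricting to $\cO(Y_\Sigma)$. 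One should also note that $\phi$ depends on the auxiliary choice of $\sigma_M$, but since the statement only asserts the existence of a dga isomorphism, this non-canonicity is harmless.
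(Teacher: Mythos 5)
Your proposal is correct and follows essentially the same route as the paper: choose the dual basis $\xi_1,\dots,\xi_n$ adapted to the maximal cone $\sigma_M$, identify $\tilde\theta_i$ with $\xi_i$, match the coefficients of $\frac{\pa}{\pa\theta_i}$ in $\{f,-\}$ with those of $\frac{\pa}{\pa\xi_i}$ in $d_L$, and deduce 2) from Proposition \ref{quasi-isomorphism}. You merely spell out the coefficient computation that the paper leaves implicit (and in doing so you correctly write $z_i+\sum_{l=n+1}^d a_{il}z_l$, fixing what appears to be a typo in the paper's displayed formula for $\{f,-\}$).
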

\begin{proof}
    Statement 2) follows from statement 1) and Proposition \ref{quasi-isomorphism}.
    For statement 1), one needs only to choose a basis $\xi_1,\cdots,\xi_n$ of $M$ that satisfies $\xi_i(\rho_j)=\delta_{ij}, \forall 1\leq j \leq n$, then the map $\theta_i \mapsto\xi_i, i=1,\cdots,n$ gives the desired isomorphism.
\end{proof}

\begin{example}
	When $X_\Sigma$ is projective, $\Theta^\bullet_{Y_\Sigma^\dag / 0^\dag} (Y_\Sigma)$ is the exterior algebra over the Stanley-Reisner ring $R(\Sigma)$ with generators $\theta_i = z_i\frac{\pa}{\pa z_i}, 1\leq i \leq n$ and $\{f,-\} = \sum_{i=1}^n (z_i + \sum_{l=n+1}^d a_{il}z_i) \frac{\pa}{\pa \theta_i}$.
	It's straight-forward to verify that $z_i + \sum_{l=n+1}^d a_{il}z_i, 1\leq i \leq n$ form a regular sequence in $R(\Sigma)$, hence 
	\begin{align*}
		H(\Theta^\bullet_{Y_\Sigma^\dag / 0^\dag} (Y_\Sigma), \{f,-\}) \cong R(\Sigma)/\langle z_i + \sum_{l=n+1}^d a_{il}z_i\rangle_{1\leq i \leq n}.
	\end{align*}
	The right hand side is exactly the well-known cohomology ring of $X_\Sigma$.
	The reason for the above gradation is now clear: it induces a grading on $H(\Theta^\bullet_{Y_\Sigma^\dag / 0^\dag} (Y_\Sigma), \{f,-\})$ that is compatible with the usual cohomological grading of $H(X_\Sigma)$.
\end{example}

The above example is included in \cite{CMW}. In this note $X_\Sigma$ is allowed to be only semi-projective.

\begin{example}
    Let $\Sigma$ be the $n$-dimensional cone generated by the basis $e_1,\cdots,e_n$ of $N$.
    On the A-side, $X_\Sigma = \C^n$, $H(\C^n,\C) \cong \C$. 
    On the mirror side, $Y_\Sigma = \Spec\, \C[z_1,\cdots,z_n] \cong \C^n$, $D$ is the divisor defined by $z_1\cdots z_n = 0$.
    In this case, $\Theta^\bullet:=\Theta^\bullet_{Y_\Sigma^\dag / 0^\dag} (Y_\Sigma)$ is the exterior algebra over $\C[z_1,\cdots,z_n]$ with generators $\theta_i = z_i\frac{\pa}{\pa z_i}, 1\leq i \leq n$.
    The twisted dga then has the form
    \begin{align*}
        0\to \Theta^n \stackrel{\sum z_i \frac{\pa}{\pa \theta_i}}{\longrightarrow} \Theta^{n-1} \stackrel{\sum z_i \frac{\pa}{\pa \theta_i}}{\longrightarrow} \cdots \stackrel{\sum z_i \frac{\pa}{\pa \theta_i}}{\longrightarrow} \Theta^1 \stackrel{\sum z_i \frac{\pa}{\pa \theta_i}}{\longrightarrow} \C[z_1,\cdots,z_n] \to 0.
    \end{align*}
    Since $z_1,\cdots,z_n$ form a regular sequence in $\C[z_1,\cdots,z_n]$, we conclude that $H(\Theta^\bullet_{Y_\Sigma^\dag / 0^\dag} (Y_\Sigma), \{f,-\}) \cong \C$.
    The cohomolgy on both sides concentrate on degree $0$.
\end{example}

The above example serves as a local model, we may consider the construction for general semi-projective toric manifolds as a gluing of such local pieces.

\begin{example}
	Let $\Sigma$ be the convex fan in $\Z^2$ with ray generators $\rho_1=e_1, \rho_2 = e_2,\rho_3=-e_2$. We have $X_\Sigma \cong \C \times \mathbb{P}^1$ and $H(X_\Sigma) \cong H(\mathbb{P}^1,\C) \cong \C[z]/\langle z^2 \rangle$. 
	For Landau-Ginzburg side, $\Theta^\bullet_{Y_\Sigma^\dag / 0^\dag} (Y_\Sigma)$ is the wedge algebra over $R(\Sigma) = \C[z_1,z_2,z_3]/\langle z_2 z_3 \rangle$ with generators $\theta_1, \theta_2$ and $\{f,-\} = z_1 \frac{\pa}{\pa \theta_1} + (z_2-z_3) \frac{\pa}{\pa \theta_2}$. Similarly, we can verify that $z_1,z_2-z_3$ is a regular sequence in $R(\Sigma)$, so
	\begin{align*}
		H(\Theta^\bullet_{Y_\Sigma^\dag / 0^\dag} (Y_\Sigma), \{f,-\}) \cong& \C[z_1,z_2,z_3]/\langle z_2z_3,z_1,z_2-z_3 \rangle \cong \C[z]/\langle z^2 \rangle \\ \cong& H(X_\Sigma)
	\end{align*}
	as graded algebras.
\end{example}

\appendix
\section{Proof of Theorem \ref{algebraic exact}}

To simplify the notation, we will denote by $Y_{i_0\cdots i_p}$ the intersection $Y_{\sigma_{i_0}}\cap \cdots \cap Y_{\sigma_{i_p}}$.
For $\tau \subset \Delta$, we will write $Y_\tau$ for $Y_{\sigma_\tau}$.

\begin{lemma} \label{p=0}
	The $p=0$ joint of complex (\ref{algebraic model}) is exact.
\end{lemma}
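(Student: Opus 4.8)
The plan is to translate the statement into explicit bookkeeping in the monomial basis of the Stanley--Reisner ring and reduce it to transparent combinatorics. Exactness at the $p=0$ end of (\ref{algebraic model}) consists of two assertions: injectivity of $r\colon \cO(Y_\Sigma)\to \mathcal{C}^0(\Delta,\cO)=\bigoplus_i\cO(Y_{\sigma_i})$, and the equality $\op{im} r=\ker\delta$, which says that a tuple $(g_i)_i$ whose entries agree on every overlap $Y_{\sigma_i}\cap Y_{\sigma_j}$ glues to a single global function. I would first set up the dictionary: writing $S(\sigma)=\{k:\rho_k\in\sigma\}$, each $Y_\sigma$ is the coordinate subspace $\{z_k=0,\ k\notin S(\sigma)\}$, so $\cO(Y_\sigma)=\C[z_k:k\in S(\sigma)]$ and restriction to a smaller piece is nothing but setting the out-of-support variables to $0$. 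Since $Y_{\sigma_i}\cap Y_{\sigma_j}=Y_{\sigma_i\cap\sigma_j}$ corresponds to the index set $S(\sigma_i)\cap S(\sigma_j)$, every arrow in (\ref{algebraic model}) is described purely combinatorially, and $\cO(Y_\Sigma)$ has $\C$-basis the monomials $z^a$ whose support $\op{supp}(a)$ lies in a common cone of $\Sigma$.

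For the equality $\op{im} r=\ker\delta$ I would argue monomial by monomial. Given a cocycle $(g_i)_i\in\ker\delta$, expand $g_i=\sum_a c_{i,a}z^a$ over monomials supported in $S(\sigma_i)$. Restriction to $Y_{\sigma_i}\cap Y_{\sigma_j}$ deletes exactly the monomials whose support is not contained in $S(\sigma_i)\cap S(\sigma_j)$, so the cocycle condition says $c_{i,a}=c_{j,a}$ whenever $\op{supp}(a)\subseteq S(\sigma_i)\cap S(\sigma_j)$. The key simplification is that this hypothesis holds automatically for \emph{any} pair $i,j$ whose cones both contain $\op{supp}(a)$, so all coefficients attached to a fixed $a$ collapse to a single scalar $c_a$; no connectivity or spectral-sequence input is needed, in contrast to a general open cover. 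Setting $g=\sum_a c_a z^a$ produces a genuine element of $\cO(Y_\Sigma)$ — every monomial occurring is supported in some $S(\sigma_i)$, hence in a cone of $\Sigma$ — and by construction $g|_{Y_{\sigma_i}}=g_i$. The opposite inclusion $\op{im} r\subseteq\ker\delta$ is the formal identity $\delta\circ r=0$.

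Injectivity of $r$ is where the only genuinely geometric input enters, and I expect it to be the main obstacle. If $g=\sum_a c_a z^a$ restricts to $0$ on every $Y_{\sigma_i}$, then reading off the coefficient of $z^a$ forces $c_a=0$ \emph{provided} $\op{supp}(a)\subseteq S(\sigma_i)$ for at least one $i$; in other words I must know that the cover $\cS$, which is only assumed to cover $|\Sigma|$ as a set, is large enough that every cone of $\Sigma$ lies inside some $\sigma_i\in\cS$. This I would obtain from the fan axioms: a given support lies in some cone $\tau\in\Sigma$, a point of $\op{relint}\tau$ lies in some $\sigma_i$ by (\ref{Cover of Y}), and in a fan a cone meeting $\op{relint}\tau$ must contain $\tau$ as a face, whence $\op{supp}(a)\subseteq S(\sigma_i)$. (Alternatively, injectivity follows at once from the fact that the Stanley--Reisner ideal is radical, so $Y_\Sigma$ is reduced and a function vanishing on the set-theoretic union $\bigcup_{\sigma\in\cS}Y_\sigma=Y_\Sigma$ is zero.) With this geometric point settled, both halves of the $p=0$ exactness are complete.
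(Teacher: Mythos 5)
Your proof is correct, but it takes a different route from the paper's. The paper does not pass to the monomial basis at all: it uses the ``fundamental observation'' of Section 3 that a function on any $Y_\sigma$ extends canonically to $Y_\Sigma$ (pull back along the coordinate projection $\C^d\to Y_\sigma$, compatibly with intersections), notes that a cocycle $(g_i)$ determines a well-defined function $g_{i_0\cdots i_p}$ on every multiple intersection $Y_{i_0\cdots i_p}$, and then glues by the inclusion--exclusion formula $g=\sum_p\sum_{i_0<\cdots<i_p}(-1)^p g_{i_0\cdots i_p}$, verifying $g|_{Y_i}=g_i$ by a telescoping cancellation. Your argument instead matches coefficients monomial by monomial, using that restriction between coordinate subspaces simply deletes monomials, that $S(\sigma_i\cap\sigma_j)=S(\sigma_i)\cap S(\sigma_j)$ (so the overlap condition is exactly equality of the surviving coefficients), and that the nonzero monomials of the Stanley--Reisner ring are those supported on a face. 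Both proofs ultimately exploit the same structural fact --- the compatible system of extension operators, which on monomials is just ``keep the same monomial'' --- but your version is the more elementary and transparent one for this specific situation, while the paper's inclusion--exclusion formula is basis-free and amounts to writing down a contracting homotopy that would work for any presheaf equipped with such compatible extensions. You also supply the two easy halves the paper leaves implicit (injectivity of $r$, via the radicality of the Stanley--Reisner ideal or the fan fact that every cone of $\Sigma$ is a face of some $\sigma_i\in\cS$, and the trivial inclusion $\op{im}r\subseteq\ker\delta$); that is a welcome completion rather than a discrepancy.
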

\begin{proof}
	Assume $\vec{g} = (g_1,\cdots,g_s), g_i \in \cO(Y_i)$ satisfies $\delta (\vec{g}) = 0$, one has to show there is a global $g\in \cO(Y_\Sigma)$ which restrict to $g_i$ on each $Y_i$.
	Closeness of $\vec{g}$ implies $g_i = g_j$ on $Y_{ij}$, it's clear then $\vec{g}$ determines a unique function $g_{i_0\cdots i_p}$ on each $Y_{i_0\cdots i_p}$.
	Define
	\begin{align} \label{global function}
		g = \sum_{p=0}^s \, \sum_{1\leq i_0<\cdots <i_p \leq s} (-1)^p g_{i_0\cdots i_p},
	\end{align}
	it is sufficient to show $g|_{Y_i} = g_i, 1 \leq i \leq s$.
	Take $i=1$ for instance.
	Fix a nonempty subset $\mathcal{T}$ of $\{2,\cdots,s\}$, and let $\tau$ and $\tau'$ be the subsimplexes in $\Delta$ with vertices in $\mathcal{T}$ and $\mathcal{T} \cup \{1\}$ respectively.
	Since the intersection of $Y_\tau$ with $Y_1$ is exactly $Y_{\tau'}$ and $g_\tau$ restricts to $g_{\tau'}$ on $Y_{\tau'}$, the restriction of $(-1)^{|\mathcal{T}|}g_{\tau'}$ and $(-1)^{|\mathcal{T}|-1}g_\tau$ on $Y_1$ cancel out.
	Hence when restricted to $Y_1$ and after cancellation, the only term on the R. H. S. of (\ref{global function}) that remains is $g_1$, as expected.
\end{proof}

\begin{lemma} \label{parametrized solution}
	Assume matrices $A \in \C^{p\times q}, B\in \C^{q\times r}$ satisfies $\{\vec{a} \in \C^{q \times 1}|A \vec{a} = 0\} \subset \{B\vec{b}|\vec{b} \in \C^{r\times 1}\}$.
	Let $\vec{a}(z) := (a_1(z),\cdots,a_q(z))^T$ be a family of vectors that depend algebraically on parameters $z = (z_1,\cdots,z_k)$ and that for any $z$, $A \vec{a}(z)=0$.
	Then we can find vectors $\vec{b}(z)$ depending algebraically on $z$ such that
	\begin{align*}
		B \vec{b}(z) = \vec{a}(z).
	\end{align*}
	Moreover, we can require $\vec{b}(z) = 0$ if $\vec{a}(z) = 0$.
\end{lemma}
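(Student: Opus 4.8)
The plan is to exploit the fact that the matrices $A$ and $B$ do \emph{not} depend on the parameter $z$; only the vector $\vec{a}(z)$ does. Consequently the hypothesis $\ker A \subseteq \op{im} B$ is a relation between two fixed linear subspaces of $\C^q$, and the entire difficulty reduces to choosing a solution $\vec{b}(z)$ of $B\vec{b} = \vec{a}(z)$ that depends algebraically on $z$. I would deliberately avoid solving the system pointwise (which could a priori introduce denominators or rank-dependent case splitting); instead I would produce a single constant matrix $C \in \C^{r\times q}$ that is a linear section of $B$ over its image, and then simply set $\vec{b}(z) := C\,\vec{a}(z)$.

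To construct $C$, I would first decompose $\C^r = U \oplus \ker B$ for some linear complement $U$. Then the restriction $B|_U \colon U \to \op{im} B$ is a linear isomorphism; let $s \colon \op{im} B \to U \subseteq \C^r$ denote its inverse. Next, choose a linear complement $W$ of $\op{im} B$ inside $\C^q$, and define $C \colon \C^q \to \C^r$ to agree with $s$ on $\op{im} B$ and to vanish on $W$. By construction $B\,C\,\vec{v} = \vec{v}$ for every $\vec{v} \in \op{im} B$, that is, $BC$ restricts to the identity on $\op{im} B$.

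It then remains to verify that $\vec{b}(z) := C\,\vec{a}(z)$ has the required properties. The entries of $\vec{b}(z)$ are fixed $\C$-linear combinations of the entries of $\vec{a}(z)$, so $\vec{b}(z)$ depends algebraically on $z$. Since $A\,\vec{a}(z) = 0$ for every $z$, we have $\vec{a}(z) \in \ker A \subseteq \op{im} B$, whence $B\,\vec{b}(z) = BC\,\vec{a}(z) = \vec{a}(z)$, as desired. The ``moreover'' clause is then automatic from linearity: $\vec{a}(z) = 0$ forces $\vec{b}(z) = C\cdot 0 = 0$. I expect there to be no genuine obstacle here; the only role of the hypothesis is to guarantee $\vec{a}(z) \in \op{im} B$, which is exactly the subspace on which the section $C$ inverts $B$. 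The one conceptual point worth stating explicitly is that a \emph{single} constant section works uniformly in $z$ and therefore automatically preserves whatever algebraic dependence $\vec{a}(z)$ already has, so no separate regularity argument is needed.
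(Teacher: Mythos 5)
Your proof is correct, and since the paper's own ``proof'' consists only of the remark that this is an easy exercise in linear algebra, your constant-section argument (a fixed $C\in\C^{r\times q}$ with $BC=\mathrm{id}$ on $\operatorname{im}B$, then $\vec{b}(z):=C\vec{a}(z)$) is exactly the standard way to discharge it. Nothing further is needed; the key observation that a single constant section works uniformly in $z$, so algebraicity and the vanishing clause come for free, is precisely the point.
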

\begin{proof}
	This is an easy exercise in linear algebra.
\end{proof}

\begin{lemma} \label{p>0}
	The $p>0$ joint of complex (\ref{algebraic model}) is exact.
\end{lemma}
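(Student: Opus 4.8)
The plan is to deduce exactness of (\ref{algebraic model}) in every positive degree from the contractibility of a simplex, by splitting the whole \v{C}ech complex into monomial pieces. The starting observation is that each $Y_{\sigma_\tau}$ is the coordinate subspace $\{z_i = 0 : \rho_i \notin \sigma_\tau\}$ of $\C^d$, so its function ring is the coordinate polynomial ring $\cO(Y_\tau) = \C[z_i : \rho_i \in \sigma_\tau]$, with $\C$-basis the monomials $z^a$, $a \in \N^d$, whose support lies in $I_\tau := \{i : \rho_i \in \sigma_\tau\}$. For a face inclusion $\tau' \subset \tau$ the cone grows, $\sigma_\tau \subset \sigma_{\tau'}$, hence $I_\tau \subset I_{\tau'}$, and the restriction map $r_{\tau\tau'}$ simply sets the extra variables to zero: it sends $z^a \mapsto z^a$ when $\operatorname{supp}(a) \subset I_\tau$ and $z^a \mapsto 0$ otherwise.

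Consequently every arrow in (\ref{algebraic model}), namely $r$ and each $\delta$, is homogeneous for the grading of all the rings $\cO(Y_\tau)$ by the exponent $a \in \N^d$. The complex therefore splits as a direct sum $\bigoplus_{a \in \N^d} C_a^\bullet$ of graded subcomplexes, and it suffices to show each $C_a^\bullet$ is exact in positive degrees. To identify $C_a^\bullet$, fix $a$ with support $S$ and put $\mathcal{T}_S := \{\, j : S \subset I_{\sigma_j}\,\}$, the set of vertices $j$ for which every ray $\rho_i$ with $i \in S$ lies in $\sigma_j$. Since $I_\tau = \bigcap_{j \in \tau} I_{\sigma_j}$, the monomial $z^a$ lies in $\cO(Y_\tau)$ exactly when every vertex of $\tau$ belongs to $\mathcal{T}_S$, i.e. when $\tau$ is a face of the full simplex on the vertex set $\mathcal{T}_S$. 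Thus $C_a^\bullet$, with the maps $\delta$, is nothing but the simplicial cochain complex with constant coefficients $\C$ of that full simplex.

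Now if $z^a$ occurs in $\cO(Y_\tau)$ for even one nonempty $\tau$, then any vertex of $\tau$ already lies in $\mathcal{T}_S$, so $\mathcal{T}_S$ is nonempty; the full simplex it spans is contractible, so the cohomology of $C_a^\bullet$ is $\C$ in degree $0$ and vanishes in all positive degrees. Summing over $a$ yields exactness of (\ref{algebraic model}) at $\mathcal{C}^p$ for every $p \geq 1$, which is the claim of Lemma \ref{p>0}; incorporating the $\cO(Y_\Sigma) = R(\Sigma)$ term as the augmentation of each $C_a^\bullet$ additionally kills $H^0$ and so reproves Lemma \ref{p=0} and the injectivity of $r$ at the same time.

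I expect the one delicate point to be the bookkeeping of the second paragraph: verifying that $r$ and $\delta$ really are monomial-homogeneous, that $\mathcal{T}_S$ exactly cuts out the faces carrying $z^a$, and that the \v{C}ech signs identify $\delta$ with the simplicial coboundary; once this is settled the only topological input is the acyclicity of a simplex. It is worth noting that this per-monomial exactness is precisely the hypothesis $\{\vec{a} : A\vec{a}=0\} \subset \{B\vec{b}\}$ of Lemma \ref{parametrized solution}: in the alternative formulation that keeps the functions as honest polynomials rather than decomposing them, that lemma turns the pointwise solvability into primitives depending algebraically on $z$, so the two approaches carry the same content.
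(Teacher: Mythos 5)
Your proof is correct, and it takes a genuinely different route from the one in the paper. You exploit the $\N^d$-multigrading: since each $\cO(Y_{\sigma_\tau})$ is the coordinate ring $\C[z_i:\rho_i\in\sigma_\tau]$ of a coordinate subspace and every restriction map sends a monomial to itself or to zero, the complex (\ref{algebraic model}) splits into monomial pieces $C_a^\bullet$, and each nonzero piece is the (augmented) simplicial cochain complex of the full simplex on the nonempty vertex set $\mathcal{T}_S=\{j:\operatorname{supp}(a)\subset I_{\sigma_j}\}$, hence acyclic in positive degrees. The key identity $I_{\sigma_\tau}=\bigcap_{j\in\tau}I_{\sigma_j}$ does hold (a ray of $\Sigma$ contained in a cone is a face of it, hence one of its generators), so your identification of the faces carrying $z^a$ is sound; and, as you note, the same decomposition yields Lemma \ref{p=0} and the injectivity of $r$ in one stroke, since $z^a$ survives in some $\cO(Y_{\sigma_\tau})$ with $\tau\neq\emptyset$ iff $\operatorname{supp}(a)$ spans a cone of $\Sigma$ iff $z^a\neq 0$ in $R(\Sigma)$ (using that $\cS$ covers $|\Sigma|$). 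The paper instead keeps cochains as honest polynomial vectors: it observes that pointwise in $z$ a cocycle is a cocycle for the constant-coefficient complex of $\Delta$, invokes Lemma \ref{parametrized solution} to make the resulting primitives depend algebraically on $z$, and then runs a descending induction over the strata $Y_\omega$ to successively correct the cocycle until it can be trivialized. Your approach buys a shorter, induction-free argument whose only topological input is the acyclicity of a simplex, and it makes transparent exactly where the hypothesis of Lemma \ref{parametrized solution} comes from; the paper's argument avoids choosing a monomial basis and stays closer to a general \v{C}ech-style exactness proof.
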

\begin{proof}
	Assume $\vec{g} = (g_\tau)_{\tau \in \Delta_p}$ satisfies $\delta(\vec{g}) = 0$. This means to each $p$-simplex $\tau$ is associated a function $g_\tau$ and the functions satisfy a set of linear compatibility conditions.
	Consider the complex $(\mathcal{C}(\Delta,\C), \delta)$ with $\mathcal{V}(\tau) = \C$ for every $\tau \subset \Delta$, it's easy to see this complex computes the $\C$-valued singular cohomology of $\Delta$. Hence for $p>0$, the cohomologies vanish.
	
	For each $z \in Y_\Delta$, $\vec{g}(z) = (g_\tau(z))_{\tau \in \Delta_p}$ determines a $p$-cycle in $(\mathcal{C}(\Delta,\C), \delta)$, hence there is a $(p-1)$-chain $\vec{h}(z) \in (\mathcal{C}(\Delta,\C),\delta)$ such that $\delta(\vec{h}(z)) = \vec{g}(z)$.
	By Lemma \ref{parametrized solution}, one can require $\vec{h}(z)$ to depend algebraically on $z \in Y_\Delta$ to get a $(p-1)$-chain $\vec{h} \in (\mathcal{C}(\Delta,\cO(Y_\Delta),\delta)$.
	The canonical projection maps $Y_\tau \to Y_\Delta, \tau \in \Delta_{p-1}$ lift $\vec{h}$ to a $(p-1)$-chain in $(\mathcal{C}^{p-1}(\Delta,\cO),\delta)$, which will be again denoted by $\vec{h}$.
	Now each component of the cycle $\vec{g}^{(1)} := \vec{g}-\delta \vec{h}$ will restrict to $0$ on $Y_\Delta$.
	
	Fix an $(s-2)$-simplex $\omega$, we have $g^{(1)}_{\tau}|_{Y_{\omega}} = 0$ if $\tau \subsetneq \omega$ and $(\vec{g}^{(1)}_\tau|_{Y_{\omega}})_{\tau \subset \omega}$ form a $p$-chain $\vec{g}^{(1),\omega}$ in $(\mathcal{C}(\omega, \cO(Y_\omega)),\delta)$.
	By Lemma \ref{parametrized solution}, there is a $(p-1)$-chain $\vec{h}^{(1),\omega}$ in $(\mathcal{C}(\omega, \cO(Y_\omega)),\delta)$ satisfying $\delta(\vec{h}^{(1),\omega})= \vec{g}^{(1),\omega}$ and each component of which will restrict to $0$ on $Y_\Delta$.
	Define $\vec{h}^{(1),\omega}_{\rho} = 0$ if $\rho \subsetneq \omega$, lift them to a $(p-1)$-chain in $(\mathcal{C}(\Delta,\cO),\delta)$ and write $\vec{g}^{(2)} := \vec{g}^{(1)} - \delta(\sum_{\omega \in \Delta_{s-2}} \vec{h}^{(1),\omega})$, we find each component of $\vec{g}^{(2)}$ will restrict to $0$ on each $Y_\omega, \omega \in \Delta_{s-2}$.
	By induction, we can find a $p$-cycle $\vec{g}^{(s-p-1)}$ in $(\mathcal{C}(\Delta,\cO),\delta)$ such that $\vec{g}^{(s-p-1)}$ is cohomologous to $\vec{g}$ and each component of $\vec{g}^{(s-p-1)}$ will restrict to $0$ on each $Y_\omega, \omega \in \Delta_{p+1}$.
	
	Now for each $p$-simplex $\tau$, $\vec{g}^{(s-p-1)}_{\tau}$ can itself be viewed as a $p$-cycle in $(\mathcal{C}(\Delta,\cO),\delta)$.
	Choose a $(p-1)$-simplex $\omega_\tau$ such that $\omega_\tau \subset \tau$, and lift $\vec{g}^{(s-p-1)}_{\tau}$ to a function $\vec{h}^{(s-p-1)}_\tau$ on $Y_{\omega_\tau}$ by the canonical projection. Clearly, $\delta(\vec{h}^{(s-p-1)}_\tau) = \pm \vec{g}^{(s-p-1)}_\tau$ and $\vec{h}^{(s-p-1)}_\tau|_{Y_\omega} = 0,\forall \omega \in \Delta_{p+1}$.
	Then $\vec{g}^{(s-p-1)} = \delta(\sum_{\tau \in \Delta_p} \pm \vec{h}^{(s-p-1)}_\tau)$.
	Take everything into consideration, $\vec{g}$ is $\delta$-exact.
\end{proof}

Theorem \ref{algebraic exact} follows from  Lemma \ref{p=0} and Lemma \ref{p>0}.

\Addresses

\end{document}